\tikzstyle{every node}=[circle, fill=black, inner sep=0pt, minimum width=2.5pt]
\newtheorem{define}{Definition}[section]
\newtheorem{lemma}[define]{Lemma}
\newtheorem{cor}[define]{Corollary}
\newtheorem{claim}[define]{Claim}
\begin{document}

\title{Polarity and Monopolarity of $3$-colourable comparability graphs}

\author{
  Nikola Yolov
  \footnote{The author is funded by
    the Engineering and Physical Sciences Research Council (EPSRC) Doctoral Training Grant
    and the Department of Computer Science, University of Oxford}\\
  \texttt{nikola.yolov@cs.ox.ac.uk}
}
\maketitle

\begin{abstract}
  We sharpen the result that polarity and monopolarity are
  NP-complete problems
  by showing that they remain NP-complete if the input graph
  is restricted to be a $3$-colourable comparability graph.

  We start by presenting a construction reducing
  $1$-$3$-SAT to monopolarity of $3$-colourable comparability graphs.
  Then we show that polarity is at least as hard as monopolarity
  for input graphs restricted to a fixed disjoint-union-closed class.
  We conclude the paper by stating that both polarity and monopolarity
  of $3$-colourable comparability graphs are NP-complete problems.

  \textbf{Keywords}: Algorithms, Graph Theory, Complexity Theory,
  NP-completeness, Polar Graphs, Comparability Graphs
\end{abstract}

\section{Introduction}
A partition $(A, B)$ of the vertices of a graph $G$ is called \emph{polar}
if $\overline{G[A]}$ and $G[B]$ are unions of disjoint cliques.
A polar partition $(A, B)$ is called \emph{monopolar} if $A$ is an independent set,
and \emph{unipolar} if $A$ is a clique.
A graph $G$ is called to be \emph{polar}, \emph{monopolar} or \emph{unipolar}
if it admits a polar, a monopolar or a unipolar partition respectively.
If $(A, B)$ is a polar partition of $G$, then $(B, A)$ is polar partition of $\overline{G}$,
hence a graph is polar iff its complement, $\overline{G}$, is polar.
First studied by Tyshkevich and Chernyak in
\cite{tyshkevich1985decomposition} and \cite{tyshkevich1985algorithms},
monopolar graphs are a natural generalisation of bipartite and split graphs,
and polar graphs are a generalisation of monopolar and co-bipartite graphs.
The problems of deciding whether a graph is polar, monopolar and unipolar
are called \emph{polarity}, \emph{monopolarity} and \emph{unipolarity} respectively.
Polarity \cite{polar_NP-hard}
and monopolarity \cite{monopolar_NP-hard} are NP-complete problems
and enjoyed a lot of attention recently.
In this work we sharpen the hardness result by
showing that the problems remain NP-complete even if the input graph is restricted
to be a $3$-colourable comparability graph.
In contrast, unipolarity can be resolved in quadratic time~\cite{recognition_unipolar}.

It is shown in \cite{3-c-NP-hard} that
polarity and monopolarity of $3$-colourable graphs are NP-complete problems.
On the other hand, there are polynomial time algorithms for
monopolarity and polarity of permutation graphs \cite{polar_perm},
and a polynomial time algorithm for monopolarity
of co-comparability graphs \cite{polar_co_comparability}.
Note that a graph is a permutation graph if and only if it is both
a comparability and a co-comparability graph,
hence the latter algorithm is a generalisation of the prior, but less efficient.
It is natural to ask whether there is a polynomial time polarity or monopolarity
algorithm for the other superclass of
permutation graphs -- comparability graphs.
In this paper we give a negative answer (provided P $\neq$ NP),
in fact we show that even if the graph is restricted to a smaller class
-- the class of $3$-colourable comparability graphs -- the problem remains NP-hard.
Polarity of comparability graphs and polarity of co-comparability graphs
are easily seen to be polynomially reducible to each other,
since a graph is polar iff its complement is polar,
hence we can deduce that polarity of co-comparability graphs is also NP-complete. 
The table below summarises our discussion so far.
(Here $C$ and $C^c$ are used to denote the classes of comparability
and co-comparability graphs respectively.)

\paragraph{}
\begin{tabular}[pos]{l | l | l}
  Input graph               & Monopolarity & Polarity \\
  \hline
  $3-$colourable            & NP-c \cite{3-c-NP-hard}         & NP-c \cite{3-c-NP-hard} \\
  $C \cap C^c$              & P \cite{polar_perm}             & P \cite{polar_perm} \\
  $C^c$                     & P \cite{polar_co_comparability} & NP-c, this paper \\
  $C$                       & NP-c, this paper                & NP-c, this paper \\
  $3-$colourable $\cap$ $C$ & NP-c, this paper                & NP-c, this paper \\
  \hline
\end{tabular}
\paragraph{}

We note that Churchley announced an unpublished work proving
that monopolarity of comparability graphs is NP-complete.
This paper is independent of it.
In comparison, we address both polarity and monopolarity,
and consider a smaller class -- the class of $3$-colourable comparability graphs,
in contrast to the class of comparability graphs.

In order to show the polarity result
we present a brief lemma stating that polarity is not easier than monopolarity
for graph classes closed under disjoint union.
The question whether there exists a class for which polarity is easier than monopolarity
was asked in \cite{3-c-NP-hard}.
We contribute to the answer by stating that such class
is certainly not closed under disjoint union.

\section{Definitions and Notation}
We use standard notation for graph theory and highlight that $G[S]$
is used to denote the induced subgraph of $G$ by the vertices $S \subseteq V(G)$.
A \emph{union of disjoint cliques} is a graph whose vertices can be partitioned into blocks,
so that two vertices are joined by an edge if and only if they belong to the same block.
A \emph{co-union of disjoint cliques} is the complement of such graph.

A graph $G$ is \emph{$k$-colourable} if its vertices can be covered by $k$ independent sets.
It is NP-hard to decide whether a graph is $k$-colourable
for $k \ge 3$~\cite{michael_johnson}.
A graph $G$ is a comparability graph if each edge can be oriented towards one of its
endpoints,
so that the result orientation is transitive.
There is an algorithm to test if a graph is a comparability graph
in $O(\mathrm{MM})$ time~\cite{spinrad},
where MM is the time required for a matrix multiplication.
A graph is \emph{perfect} if $\chi(H) = \omega(H)$ for every induced subgraph $H$.
We note that there is an interesting connection
between unipolar and perfect graphs -- almost all perfect graphs
are either unipolar or co-unipolar \cite{promelsteger}.

Comparability graphs are easily seen to be perfect,
and therefore a $3$-colourable comparability graph is simply
a $K_4$-free comparability graph.
The complete graph $K_4$ is an example of a comparability
graph which is not $3$-colourable,
and $C_5$ is an example of a $3$-colourable graph which is not comparability,
hence $3$-colourable comparability graphs is a proper subset of the classes above.

\section{Monopolarity}
It this section we show that the problem of deciding whether a
$3$-colourable comparability graph is monopolar is NP-complete.
We call a CNF-formula $\phi = \bigwedge_i C_i$ a \textbf{positive} $3$-CNF formula,
if each clause $C_i$ contains exactly three non-negated literals.
We transform the following NP-complete problem ($1$-$3$-SAT) \cite{michael_johnson}
into the problem above:

\emph{Instance:}
A positive 3-CNF-formula $\phi$ with variables $c_1 \ldots c_n$.

\emph{Question:}
Is there an assignment of the variables $\{c_i\} \rightarrow \{true, false\}$,
such that each clause contains exactly \textbf{one} true literal.

\paragraph{}
Recall that a partition $(A, B)$ of $G$ is monopolar if
$A$ is an independent set and $G[B]$ is a union of disjoint cliques.
For a fixed monopolar partition $(A, B)$,
we call a vertex $v$ \emph{left} if $v \in A$, and \emph{right} otherwise.
(Imagine that a partition $(A, B)$ is always drawn
with $A$ on the left-hand side and $B$ on the right-hand side).
Observe that if $v$ is a left vertex, the neighbours of $v$ are right,
hence they induce a union of disjoint cliques.

Consider the graph $Q$ in Figure 1.
\begin{figure}[h]
  \centering
  \begin{tikzpicture}
	\node(v_1) at (0, 0) [label = left:$v_1$] {};
	\node(v_2) at (1, 0) [label = right:$v_2$] {};
	\node(v_3) at (1/2, {1 / sqrt(2)}) [label = left:$v_3$] {};
	\node(v_4) at (1/2, {- 1 / sqrt(2)}) [label = left:$v_4$] {};
	\node(v_5) at (2, 0) [label = right:$u$] {};
	\draw (v_1) -- (v_2);
	\draw (v_1) -- (v_3);
	\draw (v_1) -- (v_4);
	\draw (v_2) -- (v_3);
	\draw (v_2) -- (v_4);
	\draw (v_5) -- (v_3);
	\draw (v_5) -- (v_4);
\end{tikzpicture}
  \caption{$Q$}
\end{figure}

\begin{claim}
  The only monopolar partition of $Q$ is $(\{v_3, v_4\}, \{v_1, v_2, u\})$.
\end{claim}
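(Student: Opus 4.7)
The plan is a short case analysis on which side of the partition contains $u$, using the fact that $G[B]$ is a disjoint union of cliques if and only if it has no induced $P_3$. Two structural observations drive everything: (i) $v_1$ and $v_2$ are true twins joined by an edge, with common neighbourhood $\{v_3,v_4\}$; (ii) $v_3$ and $v_4$ are non-adjacent but share the common neighbourhood $\{v_1,v_2,u\}$. In particular the only edges incident to $u$ are $uv_3$ and $uv_4$, and $v_3v_4 \notin E(Q)$.

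First I would rule out $u \in A$. Then $v_3,v_4 \in B$ as neighbours of $u$. If either $v_1$ or $v_2$ were in $B$, say $v_1$, then $\{v_3,v_1,v_4\} \subseteq B$ induces a $P_3$ because $v_1v_3, v_1v_4 \in E(Q)$ but $v_3v_4 \notin E(Q)$, contradicting the clique-union requirement. Hence $v_1,v_2 \in A$, but $v_1v_2 \in E(Q)$, contradicting independence of $A$.

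Next I would treat $u \in B$. If $v_3 \in B$, then the edge $uv_3$ forces $u$ and $v_3$ into a common clique of $G[B]$, so every vertex of $B$ adjacent to $v_3$ must also be adjacent to $u$; since $v_1,v_2$ are adjacent to $v_3$ but not to $u$, this pushes $v_1,v_2$ into $A$, again creating the forbidden edge $v_1v_2$ in $A$. The case $v_4 \in B$ is symmetric. Therefore $v_3, v_4 \in A$, and their neighbours $v_1, v_2$ are then forced into $B$, yielding $(\{v_3,v_4\}, \{v_1,v_2,u\})$. A direct check that $\{v_3,v_4\}$ is independent and that $G[\{v_1,v_2,u\}]$ consists of the clique $\{v_1,v_2\}$ and the isolated vertex $u$ confirms this partition is monopolar.

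Because $Q$ has only five vertices, no step is delicate; the only thing to keep straight is the systematic use of the twin-pair $v_1v_2$ as the contradiction device in each sub-case. There is no genuine obstacle to worry about here — the argument is essentially bookkeeping.
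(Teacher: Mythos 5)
Your proof is correct and uses essentially the same elementary argument as the paper: a direct case analysis driven by the independence of $A$ and the absence of induced $P_3$'s in $G[B]$. The only difference is organizational — the paper deduces immediately that $v_1,v_2$ are right because their neighbourhoods induce a $P_3$, whereas you split on the location of $u$ and use the edge $v_1v_2$ as the contradiction in each branch — but both routes are the same bookkeeping on the same five vertices.
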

\begin{proof}
  The neighbourhood of $v_1$ and $v_2$ is $P_3$
  (a path with three vertices), hence they must be both right
  for all monopolar partitions of $Q$.
  As $v_3$ and $v_4$ are not connected and they share a right neighbour,
  at least one of them is left.
  Therefore $u$ has a left neighbour, so $u$ must be right.
  However, if $v_1$, $v_2$ and $u$ are right, then $v_3$ and $v_4$ must be left.
\end{proof}

Consider the graph $H$ in Figure~2.
\begin{figure}[h]
  \centering
  \begin{tikzpicture}
	\node(v_1) at (0, 0) [label = left:$v_1$] {};
	\node(v_2) at (1, 0) [label = right:$v_2$] {};
	\node(v_3) at (1/2, {1 / sqrt(2)}) [label = left:$v_3$] {};
	\node(v_4) at (1/2, {- 1 / sqrt(2)}) [label = left:$v_4$] {};
	\node(v_5) at (2, 0) [label = above:$v_5$] {};
	\draw[->] (v_1) -- (v_2);
	\draw[->] (v_1) -- (v_3);
	\draw[->] (v_1) -- (v_4);
	\draw[->] (v_2) -- (v_3);
	\draw[->] (v_2) -- (v_4);
	\draw[->] (v_5) -- (v_3);
	\draw[->] (v_5) -- (v_4);

	\node(t_1) at (3, {1 / sqrt(2)})  [label = above:$t_1$] {};
	\node(t_2) at (3, 0)              [label = above:$t_2$] {};
	\node(t_3) at (3, {-1 / sqrt(2)}) [label = above:$t_3$] {};
	\draw[->] (v_5) -- (t_1);
	\draw[->] (v_5) -- (t_2);
	\draw[->] (v_5) -- (t_3);

	\node(v_6) at (4, {1 / sqrt(2)})  [label = above:$v_6$] {};
	\node(v_7) at (4, 0)              [label = above:$v_7$] {};
	\node(v_8) at (4, {-1 / sqrt(2)}) [label = above:$v_8$] {};
	\draw[<-] (t_1) -- (v_6);
	\draw[<-] (t_2) -- (v_7);
	\draw[<-] (t_3) -- (v_8);

	\node(v_9)  at (5, {1 / sqrt(2)}) [label = above:$v_9$] {};
	\node(v_10) at (5, 0)             [label = above:$v_{10}$] {};
	\draw[->] (v_6) -- (v_9);
	\draw[->] (v_7) -- (v_9);
	\draw[->] (v_8) -- (v_10);

	\node(v_11)  at (6, {1 / sqrt(2)}) [label = above:$v_{11}$] {};
	\draw[<-] (v_9) --  (v_11);
	\draw[<-] (v_10) -- (v_11);

	\node(v_12)  at 
	    ({5 + 1 / sqrt(2) + cos(-115)}, 
             {  - 1 / sqrt(2) + sin(-115)}) 
	     [label = left:$v_{12}$] {};
	\node(v_13)  at 
	    ({5 + 1 / sqrt(2) + cos(15)}, 
             {  - 1 / sqrt(2) + sin(15)}) 
	     [label = above:$v_{13}$] {};
	\node(v_14)  at ({5 + 1 / sqrt(2)}, { - 1 / sqrt(2)}) [label = above:$v_{14}$] {};
	\node(v_15)  at ({5 + 2 / sqrt(2)}, { - 2 / sqrt(2)}) [label = right:$v_{15}$] {};

	\draw[<-] (v_10) -- (v_12);
	\draw[<-] (v_10) -- (v_13);
	\draw[->] (v_12) -- (v_14);
	\draw[->] (v_12) -- (v_15);
	\draw[->] (v_13) -- (v_14);
	\draw[->] (v_13) -- (v_15);
	\draw[->] (v_14) -- (v_15);
\end{tikzpicture}
  \caption{$H$}
\end{figure}
The orientation of the edges is transitive, hence $H$ is a comparability graph.
Further, comparability graphs are perfect, hence $\chi(H) = \omega(H) = 3$,
i.e. $H$ is three-colourable.

\begin{lemma}
  \label{H_lemma}
  Exactly one of $t_1$, $t_2$ and $t_3$ is right
  in every monopolar partition of $H$ .
  There are exactly three monopolar partitions of $H$
  and each one is uniquely determined by which vertex of $\{t_1, t_2, t_3\}$ is right.
\end{lemma}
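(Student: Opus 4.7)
The plan is to leverage the Claim on the left end of $H$ and then cascade forced assignments through the rest of the graph, using two elementary rules repeatedly: any vertex whose open neighbourhood is not a disjoint union of cliques must be right, and any left vertex forces all its neighbours to be right.

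First, the induced subgraph $H[\{v_1,v_2,v_3,v_4,v_5\}]$ is exactly a copy of $Q$, with $v_5$ playing the role of $u$. The restriction of any monopolar partition of $H$ to this subset is itself a monopolar partition of $Q$, so by the Claim we have $v_3,v_4 \in A$ and $v_1,v_2,v_5 \in B$. An analogous observation applies at the opposite end: each of $v_{14}$ and $v_{15}$ has a $P_3$-neighbourhood in $H$ (for instance $N(v_{14})=\{v_{12},v_{13},v_{15}\}$ with $v_{12},v_{13}$ non-adjacent and both adjacent to $v_{15}$), which forces $v_{14},v_{15} \in B$ as well.

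Next, $t_1,t_2,t_3$ are pairwise non-adjacent and each adjacent to $v_5 \in B$; therefore at most one $t_i$ lies in $B$, otherwise two non-adjacent vertices would share the clique of $v_5$ in $G[B]$. To rule out the case where all three $t_i \in A$, I would trace the forced assignments: each $t_i \in A$ forces its other neighbour into $B$, giving $v_6,v_7,v_8 \in B$; then the non-adjacent neighbours $v_6,v_7$ of $v_9$ force $v_9 \in A$, which in turn forces $v_{11} \in B$; then the non-adjacent neighbours $v_8,v_{11}$ of $v_{10}$ force $v_{10} \in A$, which forces $v_{12},v_{13} \in B$; but then the non-adjacent pair $v_{12},v_{13}$ sits in the clique of $v_{14}$, a contradiction. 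Hence exactly one $t_i$ is right.

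For uniqueness, I would fix a choice $t_i \in B$ and apply the same two rules until every vertex is assigned. The two $t_j$'s lying in $A$ force $v_{j+5} \in B$, and the clique constraint at $t_i$ (whose neighbours $v_5$ and $v_{i+5}$ are non-adjacent) then forces $v_{i+5} \in A$. From here the assignment propagates through $v_9$, $v_{11}$, $v_{10}$, and out to $v_{12},v_{13}$, each step forced by one of the two rules. The cases $i=1,2$ are symmetric (both feed into $v_9$ through $v_6$ or $v_7$), while $i=3$ is only mildly different, feeding into $v_{10}$ through $v_8$ and thereby swapping the roles of $v_9$ and $v_{10}$ along the chain. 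I expect the main obstacle to be the sheer bookkeeping of this case analysis: the individual deductions are all trivial, but one must walk carefully through every branch to confirm that at each step one of the two rules forces a unique assignment.
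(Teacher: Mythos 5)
Your proof is correct and takes essentially the same route as the paper: the paper likewise uses the induced copies of $Q$ at both ends of $H$ (it explicitly invokes $H[\{v_{10},v_{12},v_{13},v_{14},v_{15}\}]\cong Q$ to force $v_{10}$ right, where you reach the contradiction one step later at $v_{14}$ via its $P_3$-neighbourhood) and then observes that $v_5$ being right permits at most one right $t_i$, leaving the remaining propagation as ``routine to check''. Your cascade of forced assignments is precisely that omitted verification, so there is no gap.
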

\begin{proof}
  Let $V_{Q_1} = \{v_1, \ldots v_5\}$ and
  $V_{Q_2} = \{v_{10}, v_{12}, v_{13}, v_{14}, v_{15}\}$.
  Observe that $H[V_{Q_1}] \cong H[V_{Q_2}] \cong Q$,
  and therefore in every monopolar partition
  $v_5$ and $v_{10}$ are right vertices and they cannot have
  other right neighbours from $V_{Q_1}$ and $V_{Q_2}$ respectively.
  The vertex $v_5$ must be right, so at most one of $\{t_1, t_2, t_3\}$ can be right.
  It is a routine to check that
  setting all $t_1$, $t_2$ and $t_3$ left cannot yield a monopolar partition.
  It is also a routine to check that setting each one of $\{t_1, t_2, t_3\}$ right
  and the other two left defines a unique monopolar partition.
\end{proof}

We associate every clause of an arbitrary positive 3-CNF formula $\phi$
with an independent copy of $H$.
The selection of the right vertex among $\{t_1, t_2, t_3\}$ in a monopolar partition
will indicate which of the variables of the clause is true.
Then we add extra synchronisation vertices,
each uniquely associated with a variable of $\phi$
and joined by an edge to all vertices associated with the same variable
in the copies of $H$.

More formally,
let $\phi = \bigwedge_{i=1}^mC_i$ be a positive $3$-CNF formula
with variables $c_1 \ldots c_n$.
Let $G$ be the disjoint union of an independent set $\{x_1, \ldots x_n\}$
and $m$ disjoint copies of $H$, $\{H_i\}_{i=1}^m$.
For each clause $C_i$, say $C_i = \{c_k \lor c_l \lor c_p\}$,
connect $x_k$ with $t_{i, 1}$, $x_l$ with $t_{i, 2}$ and $x_p$ with $t_{i, 3}$,
where $t_{i, 1}$, $t_{i, 2}$ and $t_{i, 3}$ are respectively
$t_1$, $t_2$ and $t_3$ in $H_i$.

\begin{define}
  For every positive 3-CNF formula $\phi$ define $G_\phi$ to be the graph described above.
\end{define}

\begin{lemma}
  \label{lemma_transform}
  A positive 3-CNF formula $\phi$
  is a ``yes''-instance of 1-3-SAT
  iff $G_\phi$ is monopolar.
\end{lemma}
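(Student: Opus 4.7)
The construction is designed so that, by Lemma~\ref{H_lemma}, any monopolar partition of each $H_i$ selects exactly one of $t_{i,1}, t_{i,2}, t_{i,3}$ to be right, and this choice encodes which literal of $C_i$ is the true one under 1-3-SAT. I prove both directions by exhibiting or decoding such a partition.

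\emph{Forward direction.} Assume $\phi$ has a 1-3-SAT assignment. For each clause $C_i$, let $j_i^\star$ be the index of its unique true literal, and apply Lemma~\ref{H_lemma} to partition $V(H_i) = A_i \cup B_i$ with $t_{i, j_i^\star}$ right. Put each $x_k$ on the left when $c_k$ is true and on the right when $c_k$ is false. The only new edges to examine are the synchronisation edges $x_k t_{i, j_i^k}$: ``both endpoints left'' would say $c_k$ is true while $c_k$ is not the chosen literal of $C_i$, which is impossible under 1-3-SAT, and ``both endpoints right'' gives the symmetric contradiction. Thus $A$ is independent, and $G[B]$ is the disjoint union of the unions-of-cliques $G[B_i]$ and the singletons $\{x_k : c_k \text{ false}\}$ with no edges between, hence itself a union of disjoint cliques.

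\emph{Reverse direction.} Now let $(A,B)$ be a monopolar partition of $G_\phi$; by Lemma~\ref{H_lemma} a unique $t_{i, j_i^\star}$ lies in $B$, and we call the corresponding variable the \emph{chosen} one in $C_i$. The crux, and main obstacle, is the following dichotomy: every variable $c_k$ is chosen either in all clauses of $I_k := \{i : c_k \in C_i\}$, or in none. Suppose for contradiction that $c_k$ is chosen in some $C_{i'}$ but not in some $C_i$. Then $t_{i, j_i^k} \in A$, forcing $x_k \in B$ (else $A$ would contain the edge $x_k t_{i, j_i^k}$). But $t_{i', j_{i'}^k} \in B$ is also adjacent to $x_k$, so both lie in a common clique of $G[B]$. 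Lemma~\ref{H_lemma} also places the vertex $v_5$ of $H_{i'}$ (playing the role of $u$ in the first $Q$-copy) in $B$ and adjacent to $t_{i', j_{i'}^k}$, so it too must sit in this clique, forcing an edge from $x_k$ to it---an edge that does not exist in $G_\phi$.

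With the dichotomy established, declare $c_k$ true exactly when it is chosen in every clause of $I_k$. In any clause $C_i$ the unique chosen variable is then true, and each of the other two variables---being unchosen in $C_i$ and hence, by the dichotomy, chosen in no clause at all---is false. So every clause has exactly one true literal and $\phi$ is a yes-instance of 1-3-SAT.
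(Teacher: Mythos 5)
Your proof is correct and follows essentially the same route as the paper: the forward direction extends the assignment via Lemma~\ref{H_lemma} exactly as in the text, and your "dichotomy" in the reverse direction is a repackaging of the paper's key observation that the unique right $t_{i,j}$ lies in a clique of $G[B]$ with $v_{i,5}$, which is non-adjacent to every $x_k$, forcing the corresponding $x_k$ into $A$. The extra detail you supply (explicitly checking both-left and both-right on synchronisation edges) is a welcome elaboration but not a different argument.
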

\begin{proof}
  $(\Rightarrow)$
  Assume that $f: \{c_i\} \rightarrow \{true, false\}$ is an assignment of the
  variables of $\phi$
  such that every clause contains exactly one true literal.
  Create a partition $(A, B)$ of $V(G_\phi)$ as follows:
  $x_i \in A$ for each $c_i$ with $f(c_i) = true$, and $x_i \in B$ otherwise.
  The vertices $\{x_1 \ldots x_n\}$ have disjoint neighbourhoods,
  so we can extend the partition above with $v \in A \Leftrightarrow x_i \in B$
  for each $x_i$ and neighbour $v$ of $x_i$.
  By Lemma \ref{H_lemma} the partition above can be
  further extended uniquely to each $H_i$,
  and hence the entire graph $G_\phi$.
  We conclude that $(A, B)$ is a monopolar partition of $G_\phi$.

  $(\Leftarrow)$
  Let $(A, B)$ be a monopolar partition of $G_\phi$.
  Define $f : \{c_i\} \rightarrow \{true, false\}$
  as follows: $f(c_i) = true \leftrightarrow x_i \in A$.
  From Lemma \ref{H_lemma} for each clause $C_i$ there is a unique
  $t_{i, j} \in B$, which is adjacent to $v_{i, 5} \in B$.
  Observe that $v_{i, 5}$ is non-adjacent to any $x_j$, hence
  for each clause $C$ there is a unique $x_j \in A$
  with $c_j \in C$.
\end{proof}

\begin{lemma}
  For every positive 3-CNF formula $\phi$,
  $G_\phi$ is a 3-colourable comparability graph.
\end{lemma}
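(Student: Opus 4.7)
The plan is to prove both parts of the conclusion separately: first exhibit a transitive orientation of $G_\phi$, and then deduce $3$-colourability from perfectness by bounding $\omega(G_\phi)$. Since comparability graphs are perfect, showing $\omega(G_\phi) \le 3$ will suffice for the chromatic bound once the orientation is in hand.

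For the comparability part, I would extend the transitive orientation of $H$ already displayed in Figure~2 to the whole of $G_\phi$ by orienting every new edge as $x_i \to t_{i',j}$. The crucial observation is that in the orientation of Figure~2, each of $t_1$, $t_2$, $t_3$ is a sink in $H$: its only neighbours are $v_5$ and the corresponding $v_{5+j}$, and both of the arrows at that vertex point inwards. Consequently, after adding the new edges, each $t_{i',j}$ remains a sink in $G_\phi$, and each $x_i$ is a source (all its incident edges are outgoing, since the $x_i$'s are mutually independent and only join $t$-type vertices). Any potential new $2$-path of the form $a \to b \to c$ created by the construction must use one of the new edges, and thus have either an $x_i$ or a $t_{i',j}$ as its middle vertex; but $x_i$ has no in-edges and $t_{i',j}$ has no out-edges, so no such $2$-path exists. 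Transitivity therefore reduces to transitivity inside each $H_i$, which is immediate from Figure~2.

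For the $3$-colourability part, I would just bound $\omega(G_\phi)$. A maximum clique $K$ of $G_\phi$ that contains no $x_i$ must lie in a single component of $G - \{x_1,\ldots,x_n\}$, i.e.\ inside some $H_i$, so $|K| \le \omega(H) = 3$. If $K$ contains some $x_i$, the remaining vertices of $K$ lie in $N(x_i)$, which consists only of $t$-type vertices spread across copies of $H$. Vertices in distinct copies are non-adjacent, and within a single $H_i$ the vertices $t_1, t_2, t_3$ are pairwise non-adjacent by inspection of Figure~2, so $N(x_i)$ is an independent set and $|K| \le 2$. Hence $\omega(G_\phi) \le 3$, and by the perfectness of comparability graphs $\chi(G_\phi) = \omega(G_\phi) \le 3$.

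The argument has no real obstacle: the only step that could plausibly go wrong is the verification of transitivity for the extended orientation, and this is trivialised by the sink/source structure at the interface. The remaining checks ($H$ already transitive, $\omega(H)=3$, $\{t_1,t_2,t_3\}$ independent) are direct inspection of Figure~2 and are essentially free.
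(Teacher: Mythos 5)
Your proof is correct and follows the same route as the paper: extend the orientation of Figure~2 by directing all new edges from $x_i$ into $t_{i',j}$, and obtain $3$-colourability from perfectness via $\omega(G_\phi)\le 3$. The paper merely asserts transitivity and the clique bound, whereas you supply the (correct) source/sink verification and the clique-number case analysis explicitly.
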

\begin{proof}
  To see that $G_\phi$ is a comparability graph,
  orient each copy of $H$ as oriented in Figure~2.
  Orient the remaining edges from $x_1 \ldots x_n$ towards $t_{i, j}$.
  The described orientation of $G_\phi$ is transitive.
  As $G_\phi$ is perfect, we have $\chi(G_\phi) = \omega(G_\phi) = 3$.
\end{proof}

\begin{cor}
  \label{monopolar_final}
  Monopolarity is NP-complete even if the input graph is restricted to be
  $3$-colourable comparability graph.
\end{cor}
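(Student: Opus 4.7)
The proof is essentially an assembly of the preceding machinery, so there is no genuine obstacle left; the plan is just to check that all the pieces fit together.

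First I would establish membership in NP: a candidate partition $(A,B)$ is a polynomial-size certificate, and verifying that $A$ is independent and that $G[B]$ is a disjoint union of cliques (equivalently, that $G[B]$ is $P_3$-free) can be done in polynomial time by inspecting edges and non-edges. This step is routine and need not be dwelt on.

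The heart of the corollary is NP-hardness, which I would obtain by exhibiting the polynomial-time many-one reduction $\phi \mapsto G_\phi$ from $1$-$3$-SAT, shown to be NP-complete in \cite{michael_johnson}. The map is polynomial because each copy $H_i$ has constant size (exactly $15$ vertices), so $G_\phi$ has $O(n+m)$ vertices and $O(n+m)$ edges, constructible in time linear in $|\phi|$. The preceding lemma asserts that $G_\phi$ lies in the promised class of $3$-colourable comparability graphs, and Lemma~\ref{lemma_transform} guarantees the equivalence ``$\phi$ is a yes-instance of $1$-$3$-SAT iff $G_\phi$ is monopolar''. Combining these two facts with the NP-completeness of $1$-$3$-SAT immediately yields NP-hardness of monopolarity on this restricted class, and together with NP membership this finishes the proof.

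Since every lemma cited has already been proved in the excerpt, there is no step I anticipate as hard; the only care required is to note explicitly that the reduction is polynomial-time computable and that the target graph indeed belongs to the restricted class, both of which are already in hand. Thus the write-up should be only a few lines long.
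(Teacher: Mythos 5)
Your proposal is correct and follows essentially the same route as the paper's own proof, which likewise notes NP membership and obtains NP-hardness from the reduction $\phi \mapsto G_\phi$ via Lemma~\ref{lemma_transform} and the lemma that $G_\phi$ is a $3$-colourable comparability graph. (One immaterial slip: each copy of $H$ has $18$ vertices, namely $v_1,\dots,v_{15}$ together with $t_1,t_2,t_3$, not $15$; only the constant size matters for polynomiality.)
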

\begin{proof}
  Monopolarity is in NP regardless of the restrictions on the input graph.
  Furthermore,
  NP-hardness follows from the reduction of 1-3-SAT in the statement
  of Lemma \ref{lemma_transform}.
\end{proof}

It is worth noting that further restrictions
can be imposed on problem.
We can design $G_\phi$ so that the different copies of $H$
share the four triangles they contain,
and therefore build $G_\phi$ with a constant number (four) of triangles.


\section{Polarity}
It this section we prove that it is NP-complete to decide whether a
3-colourable comparability graph is polar or not.
We do this by showing that polarity is at least as hard as monopolarity
for classes of graphs which are closed under disjoint union.
Note that $3$-colourable comparability graphs form such class.

\begin{claim}
  \label{disconencted_claim}
  Suppose $G$ is a complement of a union of disjoint cliques.
  Then $G$ is either connected or empty.
\end{claim}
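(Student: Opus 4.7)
The plan is to unpack the definition of ``union of disjoint cliques''. Writing the blocks of $\overline{G}$ as $B_1, \ldots, B_k$, two vertices of $G$ are adjacent precisely when they lie in \emph{different} blocks; in other words, $G$ is the complete multipartite graph with parts $B_1, \ldots, B_k$. Once this reformulation is in hand, the dichotomy becomes immediate from a case split on $k$.

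First I would dispose of the degenerate case $k \le 1$: here every pair of distinct vertices lies in a common block (or there are no vertices at all), so no edges appear in $G$ and $G$ is empty.

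In the main case $k \ge 2$, I would show that $G$ has diameter at most $2$, which clearly implies connectivity. Take any two vertices $u, v \in V(G)$. If they lie in different blocks they are already adjacent in $G$. Otherwise they share a common block $B_i$; since $k \ge 2$ we may pick some $w \in B_j$ with $j \ne i$, and then $u \sim w \sim v$ in $G$. Hence $G$ is connected.

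The hard part: essentially none. The argument is a one-line structural translation from ``complement of a union of disjoint cliques'' to ``complete multipartite graph'', followed by an elementary connectivity check. The only point demanding a moment's care is the interpretation of ``empty'' in the statement: it should be read as ``edgeless'', so that the configuration of a single block of size at least two is correctly classified on the empty side of the dichotomy rather than presenting itself as a disconnected counterexample.
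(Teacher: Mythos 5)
Your proof is correct and follows essentially the same case split as the paper: either $\overline{G}$ is a single clique (so $G$ is edgeless) or it has at least two blocks, in which case $G$ is connected. The only cosmetic difference is that you verify connectivity directly via a diameter-at-most-$2$ argument on the complete multipartite structure, whereas the paper invokes the general fact that the complement of a disconnected graph is connected.
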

\begin{proof}
  Since $\overline{G}$ is a union of disjoint cliques,
  $\overline{G}$ is either a clique
  of disconnected.
  The complement of a disconnected graph is connected,
  therefore $G$ is either empty or connected.
\end{proof}

We use the notation $G = 2H$ to express that $G$ is a union of two disjoint copies of $H$
without edges inbetween.

\begin{lemma}
  \label{transform}
  The following three statements are equivalent:
  \begin{enumerate}
    \item      $H = 2G$ is polar.
    \item      $G$ is monopolar
    \item      $H = 2G$ is monopolar.
  \end{enumerate}
\end{lemma}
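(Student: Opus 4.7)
The plan is to prove the three equivalences via the cycle $(2) \Rightarrow (3) \Rightarrow (1) \Rightarrow (2)$. Two of these implications are essentially immediate. For $(2) \Rightarrow (3)$, I will take a monopolar partition $(A_G, B_G)$ of $G$ and apply it to each of the two disjoint copies of $G$ inside $H = 2G$; since there are no edges between the copies, the union of the two partitions is a monopolar partition of $H$. For $(3) \Rightarrow (1)$, I will just observe that every monopolar partition is automatically polar: an independent set $A$ satisfies $\overline{H[A]} = K_{|A|}$, which is a single clique and hence a union of disjoint cliques.

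The substantive step is $(1) \Rightarrow (2)$, and this is exactly where Claim \ref{disconencted_claim} is essential. Let $(A, B)$ be a polar partition of $H = 2G$ and let $G_1, G_2$ denote the two components of $H$, each isomorphic to $G$. Since $\overline{H[A]}$ is a union of disjoint cliques, the claim forces $H[A]$ to be either connected or edgeless, and I will split on these two cases.

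In the connected case, $A$ must be contained in a single component, say $A \subseteq V(G_1)$; then $V(G_2) \subseteq B$, and the induced subgraph of $H[B]$ on $V(G_2)$ is a copy of $G$. Any induced subgraph of a union of disjoint cliques is again a union of disjoint cliques, so $G$ itself has this form and $(\emptyset, V(G))$ is a monopolar partition. In the edgeless case, $A$ is independent in $H$, and setting $A_i := A \cap V(G_i)$ and $B_i := B \cap V(G_i)$ gives an independent set $A_i$ in $G_i$ with $G_i[B_i]$ a union of disjoint cliques (again by passing to an induced subgraph of $H[B]$); hence $(A_i, B_i)$ is a monopolar partition of $G_i \cong G$.

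I do not anticipate a real obstacle: once the dichotomy from Claim \ref{disconencted_claim} is in hand, the argument reduces to routine restriction and extension of partitions between $G$ and $2G$. The only small subtlety is interpreting the word \emph{empty} in the claim as \emph{edgeless} rather than \emph{vertex-free}, which is consistent with its proof and is precisely what the case split above requires.
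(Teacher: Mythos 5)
Your proof is correct and follows essentially the same route as the paper: the two easy implications are handled identically, and for $(1) \Rightarrow (2)$ you invoke Claim \ref{disconencted_claim} to conclude that $H[A]$ is either confined to one copy (whence $G$ is a union of cliques) or edgeless (whence restricting $(A,B)$ to one copy gives a monopolar partition), which is exactly the paper's case split. Your reading of \emph{empty} as \emph{edgeless} is the intended one.
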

\begin{proof}
    $(2. \Rightarrow 3.)$ Trivial.

    $(3. \Rightarrow 1.)$ Trivial.

    $(1. \Rightarrow 2.)$
    Let $(A, B)$ be a polar partition of $V(H)$,
    and let $(V_1, V_2)$ be a partition of $V(H)$,
    such that $H[V_i] \cong G$.
    Let $A_i = A \cup V_i$.
    If $A_1$ or $A_2$ is empty, then $G$ is a union of cliques
    and hence monopolar.
    Otherwise, $H[A]$ is disconnected because there are no edges between $A_1$ and $A_2$.
    But $H[A]$ is a co-union of cliques,
    hence $H[A]$ is empty by Claim \ref{disconencted_claim}.
    We deduce that $(A_i, V_i \setminus A_i)$ is a monopolar partition $G[V_i]$.
\end{proof}

\begin{lemma}
  \label{hardness}
  Let $\mathcal{P}$ be a class of graphs which is closed under disjoint union.
  Determining polarity of instances restricted to $\mathcal{P}$
  is at least as hard as determining monopolarity
  for the same class of instances.
\end{lemma}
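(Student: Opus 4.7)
The plan is a direct reduction from monopolarity to polarity, using Lemma~\ref{transform} as the technical core.

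Given an instance $G \in \mathcal{P}$ of the monopolarity problem, I would construct the graph $H = 2G$, i.e.\ the disjoint union of two copies of $G$. Since $\mathcal{P}$ is closed under disjoint union, $H$ lies in $\mathcal{P}$, so $H$ is a legal instance of the polarity problem restricted to $\mathcal{P}$. The construction is clearly polynomial in the size of $G$.

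The correctness of the reduction is then exactly the equivalence $(1) \Leftrightarrow (2)$ of Lemma~\ref{transform}: $H = 2G$ is polar if and only if $G$ is monopolar. Hence any algorithm (or oracle) deciding polarity on $\mathcal{P}$ yields, via this construction, an algorithm deciding monopolarity on $\mathcal{P}$ with only polynomial overhead, which is precisely the statement that polarity on $\mathcal{P}$ is at least as hard as monopolarity on $\mathcal{P}$.

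There is no real obstacle here; the two ingredients that make the argument go through, closure of $\mathcal{P}$ under disjoint union (ensuring $H \in \mathcal{P}$) and Lemma~\ref{transform} (ensuring correctness), have both already been established, so the proof reduces to recording this observation.
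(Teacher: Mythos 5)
Your proposal is correct and follows essentially the same route as the paper: reduce monopolarity to polarity by forming $H = 2G$, which stays in $\mathcal{P}$ by closure under disjoint union, and invoke Lemma~\ref{transform} for correctness. Your explicit remarks that the construction is polynomial and that an oracle for polarity yields one for monopolarity only make the (very short) argument in the paper more complete.
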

\begin{proof}
  We reduce monopolarity for input graphs restricted to $\mathcal{P}$
  to polarity for input graphs restricted to $\mathcal{P}$.
  To decide if $G \in P$ is monopolar
  it is sufficient to decide whether $2G \in \mathcal{P}$ is polar
  by Lemma \ref{transform}.
\end{proof}

\begin{cor}
  \label{cor_hardness}
  The problem of deciding whether a $3$-colourable comparability graph is polar
  is an NP-complete problem.
\end{cor}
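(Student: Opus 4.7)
The plan is to assemble the corollary directly from the two pieces already proved, so the argument is essentially a verification that the hypotheses of Lemma~\ref{hardness} apply to the class of $3$-colourable comparability graphs, together with the observation that polarity itself is in NP.

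First I would observe that polarity lies in NP for arbitrary input graphs: a partition $(A,B)$ is a polynomial-size certificate, and checking that $\overline{G[A]}$ and $G[B]$ are unions of disjoint cliques amounts to verifying that neither induced subgraph contains a forbidden $P_3$ (respectively in complement or in the graph itself), which is polynomial. In particular this holds when the input is a $3$-colourable comparability graph.

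Next I would verify that the class $\mathcal{P}$ of $3$-colourable comparability graphs is closed under disjoint union, so that Lemma~\ref{hardness} is applicable. Given $G_1, G_2 \in \mathcal{P}$ with transitive orientations $\sigma_1, \sigma_2$, the orientation on $G_1 \sqcup G_2$ obtained by taking $\sigma_i$ on each component has no edges between the components, hence its transitivity follows componentwise; and a proper $3$-colouring is obtained by using a fixed proper $3$-colouring on each component independently. Therefore $G_1 \sqcup G_2 \in \mathcal{P}$.

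Finally I would conclude by invoking the two previous results: Lemma~\ref{hardness}, applied to $\mathcal{P}$, provides a polynomial-time reduction from monopolarity on $\mathcal{P}$ to polarity on $\mathcal{P}$ (sending $G$ to $2G$), and Corollary~\ref{monopolar_final} states that monopolarity on $\mathcal{P}$ is NP-hard. Combined with membership in NP, polarity restricted to $3$-colourable comparability graphs is NP-complete. There is no serious obstacle here; the only point meriting attention is the closure under disjoint union, which is immediate but is the condition that makes the reduction land back inside $\mathcal{P}$.
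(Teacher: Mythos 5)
Your proposal is correct and matches the paper's argument: membership in NP, NP-hardness of monopolarity from Corollary~\ref{monopolar_final}, and the reduction of Lemma~\ref{hardness}. The only addition is your explicit check that $3$-colourable comparability graphs are closed under disjoint union, which the paper asserts without proof at the start of the section; spelling it out is harmless and arguably an improvement.
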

\begin{proof}
  The problem is clearly in NP,
  and it is NP-complete to decide whether such graph is monopolar
  by Corollary~\ref{monopolar_final},
  hence the statement follows from Lemma \ref{hardness}.
\end{proof}

\begin{cor}
  It is an NP-complete problem to decide whether a co-comparability graph is polar.
\end{cor}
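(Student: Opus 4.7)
The plan is to exploit the self-complementarity of polarity noted in the introduction: a partition $(A,B)$ is polar for a graph $G$ if and only if $(B,A)$ is polar for $\overline{G}$, so $G$ is polar iff $\overline{G}$ is polar. This immediately gives a polynomial-time reduction between polarity of comparability graphs and polarity of co-comparability graphs via complementation.

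First I would note that polarity of co-comparability graphs lies in NP, since verifying a polar partition of any graph is a polynomial-time task (checking that one side is a union of cliques and the other a co-union of cliques). For NP-hardness, I would reduce from polarity of comparability graphs, which is NP-complete by Corollary~\ref{cor_hardness} (even restricted to the $3$-colourable case, which certainly implies hardness for all comparability graphs). Given an instance $G$ of that problem, I would compute $\overline{G}$ in polynomial time; by definition $\overline{G}$ is a co-comparability graph, and by the self-complementarity of polarity, $G$ is polar iff $\overline{G}$ is polar, so the reduction is correct.

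There is essentially no obstacle: the argument is a one-line application of the two facts that polarity is preserved under complementation and that co-comparability graphs are by definition the complements of comparability graphs. The only point worth flagging is that the $3$-colourability restriction in Corollary~\ref{cor_hardness} is not transferred across the complement (a $3$-colourable graph need not have a $3$-colourable complement), so the corollary is stated for the broader class of co-comparability graphs rather than a $3$-colourable subclass; this is a feature, not a difficulty, of the reduction.
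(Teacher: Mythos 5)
Your proposal is correct and follows exactly the paper's own argument: membership in NP is immediate, and NP-hardness follows by complementing a comparability graph (an NP-hard polarity instance by Corollary~\ref{cor_hardness}) and using the fact that $G$ is polar iff $\overline{G}$ is polar. Your side remark that the $3$-colourability restriction does not survive complementation is accurate and explains why the corollary is stated for co-comparability graphs in general.
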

\begin{proof}
  A graph is polar if its complement is polar.
  In order to decide whether a comparability graph is polar,
  it is sufficient to decide whether its complement, a co-comparability graph,
  is polar.
  The prior decision problem is NP-complete by Corollary \ref{cor_hardness},
  hence the latter is also NP-complete.
\end{proof}

\bibliographystyle{alpha}
\bibliography{main}

\begin{thebibliography}{EHM09}

\bibitem[CC86]{polar_NP-hard}
Zh~Chernyak and Arkady Chernyak.
\newblock About recognizing ($\alpha$, $\beta$) classes of polar.
\newblock {\em Discrete Mathematics}, 62(2):133--138, 1986.

\bibitem[CH12]{polar_co_comparability}
Ross Churchley and Jing Huang.
\newblock Solving partition problems with colour-bipartitions.
\newblock {\em Graphs and Combinatorics}, pages 1--12, 2012.

\bibitem[EHM09]{polar_perm}
T{\i}naz Ekim, Pinar Heggernes, and Daniel Meister.
\newblock Polar permutation graphs.
\newblock {\em Combinatorial Algorithms}, pages 218--229, 2009.

\bibitem[Far04]{monopolar_NP-hard}
Alastair Farrugia.
\newblock Vertex-partitioning into fixed additive induced-hereditary properties
  is {NP}-hard.
\newblock {\em The Electronic Journal of Combinatorics}, 11(1):R46, 2004.

\bibitem[GJ79]{michael_johnson}
Michael Garey and David Johnson.
\newblock Computers and intractability: A guide to the theory of
  {NP}-completeness.
\newblock {\em WH Freeman \& Co., San Francisco}, 1979.

\bibitem[LN14]{3-c-NP-hard}
Van~Bang Le and Ragnar Nevries.
\newblock Complexity and algorithms for recognizing polar and monopolar graphs.
\newblock {\em Theoretical Computer Science}, 528:1--11, 2014.

\bibitem[MY15]{recognition_unipolar}
Colin McDiarmid and Nikola Yolov.
\newblock Recognition of unipolar and generalised split graphs.
\newblock {\em Algorithms}, 8(1):46--59, 2015.

\bibitem[PS92]{promelsteger}
Hans~J{\"u}rgen Pr{\"o}mel and Angelika Steger.
\newblock Almost all berge graphs are perfect.
\newblock {\em Combinatorics, Probability and Computing}, 1(01):53--79, 1992.

\bibitem[Spi03]{spinrad}
Jeremy Spinrad.
\newblock {\em Efficient Graph Representations.: The Fields Institute for
  Research in Mathematical Sciences}, volume~19.
\newblock American Mathematical Society, 2003.

\bibitem[TC85a]{tyshkevich1985algorithms}
RI~Tyshkevich and AA~Chernyak.
\newblock Algorithms for the canonical decomposition of a graph and recognizing
  polarity.
\newblock {\em Izvestia Akad. Nauk BSSR, ser. Fiz.-Mat. Nauk}, 6:16--23, 1985.

\bibitem[TC85b]{tyshkevich1985decomposition}
RI~Tyshkevich and AA~Chernyak.
\newblock Decomposition of graphs.
\newblock {\em Cybernetics and Systems Analysis}, 21(2):231--242, 1985.

\end{thebibliography}

\end{document}